\numberwithin{equation}{section}
\newcommand{\be}{\begin{eqnarray}}
\newcommand{\ee}{\end{eqnarray}}
\newcommand{\non}{\nonumber}
\newcommand{\id}{\mathbb{I}}
\newcommand{\tr}{\mathop{\rm tr}\nolimits}
\newcommand{\diag}{\mathop{\rm diag}\nolimits}
\newcommand{\mA}{\mathcal{A}}
\newcommand{\mB}{\mathcal{B}}
\newcommand{\mC}{\mathcal{C}}
\newcommand{\mD}{\mathcal{D}}
\newcommand{\mQ}{\mathcal{Q}}
\newcommand{\refs}{|0\rangle}
\newcommand{\drefs}{\langle0|}
\newtheorem*{prop*}{Proposition}
\newtheorem{lemma}{Lemma}
\newcommand\blfootnote[1]{%
  \begingroup
  \renewcommand\thefootnote{}\footnote{#1}%
  \addtocounter{footnote}{-1}%
  \endgroup
}
\begin{document}

\begin{titlepage}
\strut\hfill UMTG--287
\vspace{.5in}
\begin{center}

\LARGE 
Universal Bethe ansatz solution\\[0.2in]
for the Temperley-Lieb spin chain\\
\vspace{1in}
\large 
Rafael I. Nepomechie \footnote{
Physics Department,
P.O. Box 248046, University of Miami, Coral Gables, FL 33124 USA}
and Rodrigo A. Pimenta ${}^{1,}$\footnote{
Departamento de F\'{i}sica, Universidade Federal de S\~{a}o Carlos, Caixa Postal 676, 
CEP 13565-905, S\~{a}o Carlos, Brasil}\\[0.8in]
\end{center}

\vspace{.5in}

\begin{abstract}
We consider the Temperley-Lieb (TL) open quantum spin chain
with ``free'' boundary conditions associated with the
spin-$s$ representation of quantum-deformed $sl(2)$.  We
construct the transfer matrix, and determine its eigenvalues and the
corresponding Bethe equations using analytical Bethe ansatz.  We show
that the transfer matrix has quantum group symmetry, and we propose
explicit formulas for the number of solutions of the Bethe equations
and the degeneracies of the transfer-matrix eigenvalues.  We propose
an algebraic Bethe ansatz construction of the off-shell Bethe states,
and we conjecture that the on-shell Bethe states are highest-weight
states of the quantum group.  We also propose a determinant formula
for the scalar product between an off-shell Bethe state and its
on-shell dual, as well as for the square of the norm.  We find that
all of these results, except for the degeneracies and a constant
factor in the scalar product, are universal in
the sense that they do not depend on the value of the spin.  In an
appendix, we briefly consider the closed TL spin chain
with periodic boundary conditions, and show how a previously-proposed
solution can be improved so as to obtain the complete
(albeit non-universal) spectrum.
\end{abstract}

\blfootnote{e-mail addresses: {\tt nepomechie@physics.miami.edu, pimenta@df.ufscar.br}}

\end{titlepage}

\setcounter{footnote}{0}

\section{Introduction}

The generators $\{X_{(1)}, \ldots, X_{(N-1)} \}$ 
of the unital Temperley-Lieb (TL) algebra $TL_{N}$ \cite{Temperley:1971iq},
\be
X^{2}_{(i)} &=& c X_{(i)} \,, \non \\
X_{(i)} X_{(i\pm 1)} X_{(i)} &=& X_{(i)}\,,  \non \\
X_{(i)} X_{(j)} &=& X_{(i)} X_{(j)}\,, \qquad |i-j| > 1 \,,
\label{TLalgebra}
\ee
can be used to define the Hamiltonian of an open quantum spin chain
of length $N$ with ``free'' boundary conditions
\be
H=\sum_{i=1}^{N-1}X_{(i)}\,.
\label{Hamiltonian}
\ee
This type of model has been the subject of many investigations. 
For simplicity, we focus here on the models associated with 
$U_{Q}(A_{1}) = U_{Q}sl(2)$. The generators 
$X_{(i)}$ have been constructed for any value of spin $s$ 
\cite{Batchelor:1989uk, Batchelor:1989iz}. The $s=1/2$ case
is the well-known quantum-group-invariant spin-1/2 XXZ chain 
\cite{Pasquier:1989kd}. The $s=1$ case is the quantum deformation \cite{Batchelor:1989uk} of the
pure biquadratic spin-1 chain \cite{Parkinson1, Parkinson2, Barber:1989zz}.
These models are integrable; and closed-chain versions 
with periodic boundary conditions
have been investigated for $s>1/2$ 
using inversion relations \cite{Kluemper1, Kluemper2},
numerically \cite{Alcaraz:1992uq}, 
by coordinate Bethe ansatz \cite{Parkinson1, Parkinson2, Koberle:1993in}, 
and by analytical Bethe ansatz \cite{Kulish}. Additional 
results can be found in \cite{Martin:1994,
Doikou:2005pn, Aufgebauer:2010gg, deGier:2003iu, Ribeiro:2013} for the open chain, 
and in \cite{Levy:1991, Martin:1992td, Martin:1993jka} for the closed chain.
TL models associated with higher-rank algebras have also been 
investigated \cite{Batchelor:1991uy, Koberle:1995sw, Ghiotto:2000}.

Despite these and further efforts, a number of fundamental problems
related to these models, such as the formulation of an algebraic Bethe
ansatz solution, have remained unsolved.  Moreover, the analytical
Bethe ansatz solution proposed in \cite{Kulish} does not give the
complete spectrum.

The goal of this paper is to address some of these problems.  We
construct the transfer matrix corresponding to the Hamiltonian
(\ref{Hamiltonian}), and we determine its eigenvalues using analytical
Bethe ansatz.  We prove that the transfer matrix has quantum group
symmetry, which accounts for the degeneracies of the spectrum.
We propose an algebraic Bethe ansatz construction of the Bethe states,
which (when on-shell) we conjecture are highest-weight states of the 
quantum group. The scalar product
between an off-shell Bethe state and an on-shell Bethe state is also considered, and we conjecture that
it can be given in terms of a determinant formula; the square of the norm, {\it i.e.}, the scalar product between on-shell Bethe
states, follows as a limit\footnote{ 
Such formulas are generally known as Slavnov \cite{Sla89} and Gaudin-Korepin \cite{Gaudin,PhysRevD.23.417,Korepin:1982gg} formulas,
respectively.}.
We find that all of these results, except for the degeneracies and a constant
factor in the scalar product, are 
universal in the sense that they do not depend on the value of
the spin. 

Although most of this paper concerns the open TL chain, we briefly
consider the closed TL chain
with periodic boundary conditions in an appendix.  There we revisit the
analytical Bethe ansatz computation in \cite{Kulish}, and show how 
the proposed solution can be improved so as to obtain the complete spectrum.
In contrast with the case of the open chain, the 
solutions of the closed-chain Bethe equations are not universal, as the Bethe roots 
depend on the value of the spin.
 
The outline of this paper is as follows.  In section
\ref{sec:transfer} we describe the construction of the Hamiltonian
(\ref{Hamiltonian}) and the corresponding transfer matrix.  In section
\ref{sec:BA} we use analytical Bethe ansatz to determine the eigenvalues
of the transfer matrix and the corresponding Bethe equations.  In
section \ref{sec:symmetry} we show that the transfer matrix has
quantum group symmetry, and we propose explicit formulas for the
number of solutions of the Bethe equations and the degeneracies of the
transfer-matrix eigenvalues.  In section \ref{sec:ABA} we present our 
proposals for the algebraic Bethe ansatz solution and scalar products.
We briefly discuss these results and remaining problems in section 
\ref{sec:discussion}. We treat the closed TL chain in 
Appendix \ref{sec:closed}.

\section{Transfer matrix}\label{sec:transfer}

We begin this section by describing in more detail the construction of the 
Hamiltonian (\ref{Hamiltonian}). We then construct the corresponding transfer 
matrix, which is the generating function of the Hamiltonian and the
higher local conserved commuting quantities, and we review some of its important 
properties.

We consider the TL open quantum spin chain corresponding to the
spin-$s$ representation of $U_{Q}sl(2)$.  The $X_{(i)}$ appearing in
the Hamiltonian (\ref{Hamiltonian}) are operators on $\left({\mathbb C}^{2s+1}\right)^{\otimes N}$ defined by
\be
X_{(i)} = X_{i, i+1} \,,
\ee
where $X$ is a $(2s+1)^{2}$ by $(2s+1)^{2}$ matrix (an 
endomorphism of ${\mathbb C}^{2s+1} \otimes {\mathbb C}^{2s+1}$) with the 
following matrix elements \cite{Batchelor:1989uk}
\be
\langle m_{1}, m_{2}|X|m'_{1}, m'_{2}\rangle & = &
(-1)^{m_{1}-m'_{1}} 
Q^{m_{1}+m'_{1}}\delta_{m_{1}+m_{2},0}\delta_{m'_{1}+m'_{2},0}\,, 
\ee
where $m_{1}, m_{2}, m'_{1}, m'_{2} = -s, -s+1, \ldots, s$,
and $s=\frac{1}{2}, 1, \frac{3}{2}, \ldots$.  In other words,
$X_{(i)}$ is an operator on $N$ copies of ${\mathbb C}^{2s+1}$, which acts as $X$ on 
copies $i$ and $i+1$, and otherwise as the identity operator, 
\be
X_{(i)} = \id^{\otimes(i-1)} \otimes X \otimes \id^{\otimes(N-i-1)} \,, 
\ee
where $\id$ is the identity operator on ${\mathbb C}^{2s+1}$. 
These operators satisfy the TL algebra (\ref{TLalgebra}), 
where $c$ is given by
\be
c = \left[2s+1\right]_{Q} = \frac{Q^{2s+1}-Q^{-2s-1}}{Q-Q^{-1}} =
 \sum_{k=-s}^{s} Q^{2k} \equiv -\left(q + 
\frac{1}{q}\right) \,.
\ee
We assume throughout this paper that $Q$ has a generic value. 

The Hamiltonian (\ref{Hamiltonian}) is integrable for any value of 
spin $s$. In the 
notation of \cite{Kulish}, the corresponding R-matrix is given by 
\cite{Jones:1990hq}
\be
R(u) = \left(u q - \frac{1}{u q}\right) {\cal P} + \left(u  - 
\frac{1}{u}\right) {\cal P} X \,,
\label{Rmatrix}
\ee
where ${\cal P}$ is the permutation matrix on ${\mathbb C}^{2s+1} 
\otimes {\mathbb C}^{2s+1}$. Indeed, the Yang-Baxter equation
\be
R_{12}(u_{1}/u_{2})\, R_{13}(u_{1}/u_{3})\, R_{23}(u_{2}/u_{3}) = 
R_{23}(u_{2}/u_{3})\, R_{13}(u_{1}/u_{3})\, R_{12}(u_{1}/u_{2}) 
\ee
is satisfied. This R-matrix has the unitarity property
\be
R_{12}(u) R_{21}(u^{-1}) = \zeta(u)\, \id^{\otimes 2} \,,
\qquad \zeta(u) = \omega(u q^{-1})\, \omega(u^{-1} q^{-1})\,,
\label{unitarity}
\ee
where $R_{21} = {\cal P}_{12}\, R_{12}\, {\cal P}_{12} = 
R_{12}^{t_{1} t_{2}}$, and $\omega(u)$ is defined as
\be
\omega(u) = u - \frac{1}{u} \,.
\ee 
This R-matrix also has crossing symmetry
\be
R_{12}(u) = V_{1} R^{t_{2}}_{12}(-u^{-1} q^{-1}) V_{1} \,,
\label{crossingR}
\ee 
where $V$ is an anti-diagonal matrix with elements
\be
V_{jk} = (-1)^{j} Q^{s+1-j}\delta_{j+k,2s+2} \,.
\ee

The model (\ref{Hamiltonian}) is an {\em open} spin chain.
For an integrable open spin chain, the transfer matrix is given by \cite{Sklyanin:1988yz} 
\be
t(u) = \tr_{0} K^{+}_{0}(u)\, T_{0}(u)\, K^{-}_{0}(u)\, \hat T_{0}(u)\,,
\label{transfergen}
\ee
where $T_{0}(u)$ and $\hat T_{0}(u)$ are the monodromy matrices
\be
T_{0}(u) = R_{0N}(u) \cdots R_{01}(u)\,, \qquad
\hat T_{0}(u) = R_{10}(u) \cdots R_{N0}(u)\,.
\label{monodromy}
\ee
Moreover, the K-matrices (endomorphisms of ${\mathbb C}^{2s+1}$) satisfy the 
boundary Yang-Baxter equations
\be
R_{12}(u/v)\, K^{-}_{1}(u)\, R_{21}(u v)\, K^{-}_{2}(v) = K^{-}_{2}(v)\, 
R_{12}(u v)\, K^{-}_{1}(u)\, R_{21}(u/v)
\ee
and \cite{Mezincescu:1990ui}
\be
\lefteqn{R_{12}(v/u)\, K^{+\ t_{1}}_{1}(u)\, M^{-1}_{1}\, R_{21}(u^{-1} 
v^{-1} q^{-2})\, M_{1}\, K^{+\ t_{2}}_{2}(v)}\non \\
&& = K^{+\ t_{2}}_{2}(v)\, 
M_{1}\, R_{12}(u^{-1} v^{-1} q^{-2})\, M^{-1}_{1}\, 
K^{+\ t_{1}}_{1}(u)\, R_{21}(v/u) \,,
\ee 
where $M$ is the diagonal matrix given by
\be
M=V^{t}\, V = \diag(Q^{-2s}\,, Q^{-2(s-1)}\,, \ldots\,, Q^{2s}) \,.
\ee
The Hamiltonian (\ref{Hamiltonian}) corresponds to the special case
with quantum-group invariance  \cite{Mezincescu:1990ui}
\be
K^{-} = \id\,, \qquad K^{+} = M \,,
\ee
and therefore the transfer matrix (\ref{transfergen}) takes the simpler form
\be
t(u) = \tr_{0} M_{0}\, T_{0}(u)\, \hat T_{0}(u)\,.
\label{transfer}
\ee
Indeed, the Hamiltonian is related to the transfer matrix as follows
\be
H = \alpha \frac{d}{du}t(u)\Big\vert_{u=1} + \beta\, \id^{\otimes N} \,,
\label{Htransfrltn}
\ee
where
\be
\alpha = -\left[4 \omega(q^{2})\, \omega(q)^{2N-2} \right]^{-1}\,, \qquad 
\beta = \frac{\omega(q)}{\omega(q^{2})}-\frac{N}{2}\frac{\omega(q^{2})}{\omega(q)}\,.
\label{alphabeta}
\ee
The higher conserved quantities can be obtained by
taking higher derivatives of the transfer matrix. These quantities 
commute with each other by virtue of the commutativity property \cite{Sklyanin:1988yz} 
\be
\left[ t(u)\,, t(v) \right] = 0 \,.
\label{commutativity}
\ee
The transfer matrix also has crossing symmetry \cite{Mezincescu:1991ag}
\be
t(u) = t(-u^{-1} q^{-1}) \,.
\label{crossingtransf}
\ee

\section{Analytical Bethe ansatz}\label{sec:BA}

We now proceed to determine the eigenvalues of the transfer matrix
(\ref{transfer}) by analytical Bethe ansatz \cite{Mezincescu:1991ag,
Reshetikhin:1983, Reshetikhin:1983vw, Wang2015}.  To this end, it is convenient
to introduce inhomogeneities $\{\theta_{j}\}$, i.e. to
consider instead the inhomogeneous transfer matrix
\be
t(u; \{\theta_{j}\}) = \tr_{0} M_{0}\, T_{0}(u; \{\theta_{j}\})\, 
\hat T_{0}(u; \{\theta_{j}\})\,,
\label{transferinhomo}
\ee
where
\be
T_{0}(u; \{\theta_{j}\}) = R_{0N}(u/\theta_{N}) \cdots 
R_{01}(u/\theta_{1})\,, \qquad
\hat T_{0}(u; \{\theta_{j}\}) = R_{10}(u \theta_{1}) \cdots R_{N0}(u \theta_{N})\,.
\label{monodromyinhomo}
\ee
As noted in \cite{Kulish}, the R-matrix (\ref{Rmatrix}) degenerates 
into a one-dimensional projector at $u=q^{-1}$,
\be
R(q^{-1}) = (q^{-1} - q) (2s+1) (-1)^{2s} P^{-}\,, \qquad  P^{-} = 
\frac{(-1)^{2s}}{2s+1} {\cal P} X\,, \qquad ( P^{-} )^{2} =  P^{-} \,.
\ee
Hence, we can use
the fusion procedure \cite{Kulish:1981gi, Kulish:1981bi}, as 
generalized to the case of boundaries in \cite{Mezincescu:1991ke}, to 
obtain the fusion formula
\be
t(u; \{\theta_{j}\})\, t(u q; \{\theta_{j}\}) = \frac{1}{\zeta(u^{2} 
q^{2})} \left[\tilde t(u; 
\{\theta_{j}\}) + f(u)\, \id^{\otimes N} \right]\,,
\label{fusion}
\ee
where $\tilde t(u; \{\theta_{j}\})$ is a fused transfer matrix, and 
the scalar function $f(u)$ is given by a product of quantum 
determinants
\be
f(u) &=& \Delta(K^{+})\, \Delta(K^{-})\, \delta(T(u))\, \delta(\hat 
T(u)) \non \\
&=& g(u^{-2} q^{-3})\, g(u^{2} q)\, \prod_{i=1}^{N}\left[ \zeta( u 
q/\theta_{i})\, \zeta( u q \theta_{i}) \right] 
\,,
\ee
where $g(u)$ is given by
\be
g(u) = \tr_{12} R_{12}(u) V_{1} V_{2} P_{12}^{-} = (-1)^{2s+1} 
\omega(u q^{-1}) \,.
\ee
Using the fact that the fused transfer matrix vanishes when evaluated 
at $q^{-1}\theta_{i}$, i.e.
\be
\tilde t(q^{-1}\theta_{i}; \{\theta_{j}\})=0\,, \qquad i = 1, \ldots, N\,,
\ee
it follows from (\ref{fusion}) that the fundamental transfer matrix 
(\ref{transferinhomo}) satisfies a set of exact functional relations
\be
t(q^{-1}\theta_{i}; \{\theta_{j}\})\, t(\theta_{i}; \{\theta_{j}\}) = 
F(q^{-1}\theta_{i})\, \id^{\otimes N}\,, \qquad i = 1, \ldots, N\,,
\label{funcrltntransf}
\ee
where 
\be
F(u) = \frac{f(u)}{\zeta(u^{2} q^{2})} = -\frac{\omega(u^{2})\, 
\omega(u^{2} q^{4})}{\omega(u^{2} q)\, \omega(u^{-2}q^{-3})} 
\prod_{i=1}^{N}\left[ \omega( u/\theta_{i})\, \omega( u 
q^{2}/\theta_{i})\, \omega(u \theta_{i})\, \omega( u q^{2} \theta_{i})  
\right] \,.
\label{functionF}
\ee 

Let us denote the eigenvalues of $t(u; \{\theta_{j}\})$ by 
$\Lambda(u; \{\theta_{j}\})$. From (\ref{crossingtransf}) it follows 
that the eigenvalues have crossing symmetry
\be
\Lambda(u; \{\theta_{j}\}) = \Lambda(-u^{-1} q^{-1}; 
\{\theta_{j}\})\,;
\label{crossing}
\ee
and from (\ref{funcrltntransf}) it follows 
that the eigenvalues obey the functional relations
\be
\Lambda(q^{-1}\theta_{i}; \{\theta_{j}\})\, \Lambda(\theta_{i}; \{\theta_{j}\}) = 
F(q^{-1}\theta_{i})\,, \qquad i = 1, \ldots, N\,.
\label{funcrltn}
\ee
To solve these equations, we introduce the functions
\be
a(u; \{\theta_{j}\}) &=&-\frac{\omega(u^{2} q^{2})}{\omega(u^{2} q)} \prod_{i=1}^{N}\left[ 
\omega( u q/\theta_{i})\, \omega(u q \theta_{i}) \right]\,, \non\\
d(u; \{\theta_{j}\}) &=&-\frac{\omega(u^{2})}{\omega(u^{2} q)} \prod_{i=1}^{N}\left[ 
\omega( u/\theta_{i})\, \omega(u\theta_{i}) \right] = a(-u^{-1} q^{-1}; \{\theta_{j}\})\,,
\ee
which have the properties 
\be
a(q^{-1}\theta_{i}; \{\theta_{j}\}) = 0 = d(\theta_{i}; \{\theta_{j}\})\,, 
\qquad a(\theta_{i}; \{\theta_{j}\})\, 
d(q^{-1}\theta_{i}; \{\theta_{j}\}) = F(q^{-1}\theta_{i})\,.
\ee 
It is easy to see that equations (\ref{crossing}) and (\ref{funcrltn}) are satisfied by
\be
\Lambda(u; \{\theta_{j}\}) = a(u; \{\theta_{j}\})\, \frac{\mQ(u q^{-1})}{\mQ(u)} 
+ d(u; \{\theta_{j}\})\, \frac{\mQ(u q)}{\mQ(u)} \,, 
\label{lambdagen}
\ee
where $\mQ(u)$ is any crossing-invariant function
\be
\mQ(u) = \mQ(-u^{-1} q^{-1}) \,.
\label{Qcrossing}
\ee
From the form (\ref{Rmatrix}) of the R-matrix and the commutativity 
property (\ref{commutativity}), it follows that $\Lambda(u; 
\{\theta_{j}\})$ must be a Laurent polynomial in $u$ (with a finite 
number of terms). We assume that $\mQ(u)$ is also a Laurent 
polynomial, and is given by
\be
\mQ(u) = \prod_{k=1}^{M}\omega(u/u_{k})\, \omega(u q u_{k})\,,
\label{Qfunc}
\ee
where the so-called Bethe roots
$\{ u_{1}, \ldots, u_{M} \}$ are still to be determined,
which is consistent with (\ref{Qcrossing}). Obviously $\mQ(u_{k}) = 
0$, which means that both terms in the expression (\ref{lambdagen}) for $\Lambda(u; 
\{\theta_{j}\})$ have a simple pole at $u=u_{k}$. (We assume that the 
Bethe roots are distinct, and are not equal to 0 or $\infty$.) The corresponding residues 
must cancel (since $\Lambda(u; \{\theta_{j}\})$ must be finite for $u$ not equal to 0 or $\infty$), 
which implies the so-called Bethe equations
\be
\frac{a(u_{k}; \{\theta_{j}\})}{d(u_{k}; \{\theta_{j}\})}
= -\frac{\mQ(u_{k} q)}{\mQ(u_{k} q^{-1})}\,, \qquad k=1, \ldots, M\,.
\ee
Since we no longer need the inhomogeneities, we now set them to unity 
$\theta_{j}=1$. 

To summarize, we have argued that the eigenvalues 
$\Lambda(u)$ of the transfer matrix $t(u)$ (\ref{transfer}) are given 
by 
\be
\Lambda(u) &=& -\frac{1}{\omega(u^{2} q)}\Bigg[ \omega(u^{2} q^{2})\, \omega(u q)^{2N}
\prod_{j=1}^{M}\frac{\omega(u q^{-1}/u_{j})\, \omega(u 
u_{j})}{\omega(u/u_{j})\, \omega(u q u_{j})} \non\\
&&+ \omega(u^{2})\, \omega(u)^{2N}
\prod_{j=1}^{M}\frac{\omega(u q/u_{j})\, \omega(u q^{2} u_{j})}{\omega(u/u_{j})\, \omega(u q u_{j})}
\Bigg] \,, 
\label{lambda}
\ee
and the Bethe roots are given by the Bethe equations
\be
\left[\frac{\omega(u_{k} q)}{\omega(u_{k})}\right]^{2N} = 
\prod_{\scriptstyle{j \ne k}\atop \scriptstyle{j=1}}^M
\frac{\omega(u_{k} u_{j}^{-1} q)\, \omega(u_{k} u_{j} q^{2})}
{\omega(u_{k} u_{j}^{-1} q^{-1})\, \omega(u_{k} u_{j})} \,.
\label{BE}
\ee
These equations take a more symmetric form in terms of the rescaled 
Bethe roots $\tilde u_{k} \equiv u_{k} q^{1/2}$:
\be
\left[\frac{\omega(\tilde u_{k} q^{1/2})}{\omega(\tilde u_{k} q^{-1/2})}\right]^{2N} = 
\prod_{\scriptstyle{j \ne k}\atop \scriptstyle{j=1}}^M
\frac{\omega(\tilde u_{k} \tilde u_{j}^{-1} q)\, \omega(\tilde u_{k} \tilde u_{j} q)}
{\omega(\tilde u_{k} \tilde u_{j}^{-1} q^{-1})\, \omega(\tilde u_{k} 
\tilde u_{j} q^{-1})} \,.
\label{BEsym}
\ee
Note that the results (\ref{lambda}) - (\ref{BEsym}) do not depend 
on the value of $s$; in particular, they coincide with the well-known results for the
case $s=1/2$ \cite{Pasquier:1989kd, Mezincescu:1991ag}. This is 
consistent with the TL equivalence (see e.g. 
\cite{Temperley:1971iq, Barber:1989zz, Baxter:1982zz, Baxter:1982xp, Martin}), which suggests
that the spectrum (but not the degeneracies) of the 
TL Hamiltonian (\ref{Hamiltonian}) is independent of the 
representation.

We have verified numerically for small values of $N$ and $s$ that 
every distinct eigenvalue of the transfer matrix can be expressed 
in the form (\ref{lambda}). See Tables \ref{table:N2} -  
\ref{table:N4}, and note that all $(2s+1)^{N}$ eigenvalues are 
accounted for.

\begin{table}[h!]
   \small
 \centering
 \begin{tabular}{cc|c|c|c|}
   \cline{3-5}
   & & \multicolumn{3}{ c| }{Degeneracies}\\
   \hline
   \multicolumn{1}{ |c|  } {$M$} & {  $\{ u_{k} \}$  } & $s=\frac{1}{2}$ & $s=1$  & $s=\frac{3}{2}$\\
   \hline
    \multicolumn{1}{ |c|  } 0 & - & 3 & 8 & 15 \\
    \multicolumn{1}{ |c|  } 1 & $1.34164 + 0.447214 i$ & 1 & 1 & 1 \\
   \hline
   & \multicolumn{1}{ c|  } {total:} & 4 & 9 & 16\\
   \cline{3-5}
   \end{tabular}
  \caption{\small Solutions $\{ u_{k} \}$ of the Bethe equations (\ref{BE}) and degeneracies of the corresponding
  eigenvalues (\ref{lambda}) for $N=2$ and $s=\frac{1}{2}, 1, 
  \frac{3}{2}$ with $q = 0.5$.}
  \label{table:N2}
\end{table}

\begin{table}[h!]
   \small
 \centering
 \begin{tabular}{cc|c|c|c|}
   \cline{3-5}
   & & \multicolumn{3}{ c| }{Degeneracies}\\
   \hline
   \multicolumn{1}{ |c|  } {$M$} & {  $\{ u_{k} \}$  } & $s=\frac{1}{2}$ & $s=1$  & $s=\frac{3}{2}$\\
   \hline
    \multicolumn{1}{ |c|  } 0 & - & 4 & 21 & 56 \\
    \multicolumn{1}{ |c|  } 1 & $1.22474 + 0.707107 i$ & 2 & 3 & 4 \\
    \multicolumn{1}{ |c|  } 1 & $1.38873 + 0.267261 i$ & 2 & 3 & 4 \\
   \hline
   & \multicolumn{1}{ c|  } {total:} & 8 & 27 & 64\\
   \cline{3-5}
   \end{tabular}
  \caption{\small Solutions $\{ u_{k} \}$ of the Bethe equations (\ref{BE}) and degeneracies of the corresponding
  eigenvalues (\ref{lambda}) for $N=3$ and $s=\frac{1}{2}, 1, 
  \frac{3}{2}$ with $q = 0.5$.}
  \label{table:N3}
\end{table}

\begin{table}[h!]
   \small
 \centering
 \begin{tabular}{cc|c|c|c|}
   \cline{3-5}
   & & \multicolumn{3}{ c| }{Degeneracies}\\
   \hline
   \multicolumn{1}{ |c|  } {$M$} & {  $\{ u_{k} \}$  } & $s=\frac{1}{2}$ & $s=1$  & $s=\frac{3}{2}$\\
   \hline
    \multicolumn{1}{ |c|  } 0 & - & 5 & 55 & 209 \\
    \multicolumn{1}{ |c|  } 1 & $1.10176 + 0.886631 i$ & 3 & 8 & 15 \\
    \multicolumn{1}{ |c|  } 1 & $1.34164 + 0.447214 i$ & 3 & 8 & 15 \\
    \multicolumn{1}{ |c|  } 1 & $1.40092 + 0.193427 i$ & 3 & 8 & 15 \\
    \multicolumn{1}{ |c|  } 2 & $1.81555 - 
    0.854196 i$, $1.81555 + 
    0.854196 i$ & 1 & 1 & 1 \\
    \multicolumn{1}{ |c|  } 2 & $1.28401 + 0.592723 i$, $1.3969 + 
    0.220635 i$ & 1 & 1 & 1 \\
   \hline
   & \multicolumn{1}{ c|  } {total:} & 16 & 81 & 256\\
   \cline{3-5}
   \end{tabular}
  \caption{\small Solutions $\{ u_{k} \}$ of the Bethe equations (\ref{BE}) and degeneracies of the corresponding
  eigenvalues (\ref{lambda}) for $N=4$ and $s=\frac{1}{2}, 1, 
  \frac{3}{2}$ with $q = 0.5$.}
  \label{table:N4}
\end{table}

In closing this section, we note that the eigenvalues of the 
Hamiltonian (\ref{Hamiltonian}) are given by
\be
E = \alpha \frac{d}{du}\Lambda(u)\Big\vert_{u=1} + \beta = 
\frac{1}{2}\omega(q)\sum_{j=1}^{M}\left[\frac{\omega(u_{j}^{2})}{\omega(u_{j})^{2}}
-\frac{\omega(u_{j}^{2} q^{2})}{\omega(u_{j} q)^{2}}\right]\,,
\ee
as follows from (\ref{Htransfrltn}),  (\ref{alphabeta}) and  
(\ref{lambda}) .

\section{Quantum group symmetry}\label{sec:symmetry}

In this section we demonstrate the quantum-group invariance of the transfer 
matrix, and we discuss the implications of this symmetry for the 
Bethe ansatz solution.

\subsection{Symmetry of the transfer matrix}\label{sec:symmtransf}

Let us denote by $R^{\pm}$ the asymptotic limits of the R-matrix $R(u)$
(\ref{Rmatrix})
\be
R^{+} &=& \lim_{u\rightarrow\infty}\frac{1}{u}\, R(u) = {\cal P}(q+X) \,,
\non \\
R^{-} &=& \lim_{u\rightarrow 0}-u\, R(u) = {\cal P}(q^{-1}+X) \,,
\ee
and let us similarly denote by $T^{\pm}_{0}$ the asymptotic limits of the monodromy matrix 
$T_{0}(u)$ (\ref{monodromy})
\be
T^{\pm}_{0} = R^{\pm}_{0N} \cdots R^{\pm}_{01} \,.
\ee
Regarding $T^{\pm}_{0}$ as a $(2s+1) \times (2s+1)$ matrix in the
auxiliary space, its matrix elements $T^{\pm}_{ij}$ (which are
operators on the quantum space $\left({\mathbb C}^{2s+1}\right)^{\otimes
N}$) define a quantum group, which has been identified in \cite{KMN}
as $U_{q}(2s+1)$.\footnote{For $s>1/2$, this symmetry is larger 
than the $U_{Q}sl(2)$ symmetry \cite{Batchelor:1989uk}, which is also present.}
We shall demonstrate that each of these matrix
elements commutes with the transfer matrix
\be
\left[T_{ij}^{\pm}\,, t(u) \right] = 0 \,, \qquad i, j = 1, 2, \ldots, 
2s+1\,.
\label{qgsymmetry}
\ee
The proof, which is similar to the 
one in \cite{Mezincescu:1991rb} (see also  \cite{Kulish:1991np}), 
requires two lemmas:

\begin{lemma}
\be
\left[R_{12}^{\pm}\, T_{1}^{\pm}\,, T_{2}(u)\, \hat T_{2}(u) 
\right] = 0 \,.
\label{lemma1}
\ee
\end{lemma}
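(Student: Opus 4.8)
The plan is to reduce \eqref{lemma1} to two intertwining (``RTT-type'') relations between the two auxiliary spaces, each of which follows from the Yang-Baxter equation by an asymptotic limit. To keep the labels unambiguous I denote the two auxiliary spaces by $a$ and $b$ (with $a=1$, $b=2$ in the statement) and let $n\in\{1,\dots,N\}$ label a generic quantum site, so that $T^{\pm}_{a}=R^{\pm}_{aN}\cdots R^{\pm}_{a1}$, $T_{b}(u)=R_{bN}(u)\cdots R_{b1}(u)$ and $\hat T_{b}(u)=R_{1b}(u)\cdots R_{Nb}(u)$.

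First I would extract the required three-space identities from the Yang-Baxter equation. Putting $a$, $b$ and a site $n$ into its three factors and sending the space-$a$ spectral parameter to $\infty$ for the $+$ case (using $R(u)\sim u\,R^{+}$) or to $0$ for the $-$ case (using $R(u)\sim -u^{-1}R^{-}$), the leading term gives
\begin{equation}
R^{\pm}_{ab}\,R^{\pm}_{an}\,R_{bn}(u)=R_{bn}(u)\,R^{\pm}_{an}\,R^{\pm}_{ab}\,,\qquad
R^{\pm}_{an}\,R^{\pm}_{ab}\,R_{nb}(u)=R_{nb}(u)\,R^{\pm}_{ab}\,R^{\pm}_{an}\,,
\label{plan:star}
\end{equation}
the first obtained by assigning the three spaces in the order $(a,b,n)$ and the second in the order $(a,n,b)$. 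Beyond \eqref{plan:star} the only input is that $R$-matrices acting on disjoint pairs of spaces commute.

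Next I would propagate \eqref{plan:star} along the chain. Using the first identity of \eqref{plan:star} together with the commutativity of $R^{\pm}_{an}$ with $R_{bm}(u)$ for $m\neq n$, a telescoping over the sites $n=N,\dots,1$ yields the asymptotic RTT relation for $T_{b}$,
\begin{equation}
R^{\pm}_{ab}\,T^{\pm}_{a}\,T_{b}(u)=T_{b}(u)\,T^{\pm}_{a}\,R^{\pm}_{ab}\,,
\label{plan:A}
\end{equation}
while the second identity of \eqref{plan:star}, propagated analogously through $\hat T_{b}$, yields
\begin{equation}
T^{\pm}_{a}\,R^{\pm}_{ab}\,\hat T_{b}(u)=\hat T_{b}(u)\,R^{\pm}_{ab}\,T^{\pm}_{a}\,.
\label{plan:B}
\end{equation}
Then \eqref{lemma1} follows by composition: apply \eqref{plan:A} to carry $R^{\pm}_{ab}T^{\pm}_{a}$ across $T_{b}(u)$, and then apply \eqref{plan:B} to carry the resulting $T^{\pm}_{a}R^{\pm}_{ab}$ across $\hat T_{b}(u)$, arriving at $T_{b}(u)\hat T_{b}(u)\,R^{\pm}_{ab}\,T^{\pm}_{a}$.

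The step to get right is the relative placement of $R^{\pm}_{ab}$ and $T^{\pm}_{a}$ in the two relations. Relation \eqref{plan:A} carries the block $R^{\pm}_{ab}T^{\pm}_{a}$, but after the first move one is left with $T^{\pm}_{a}R^{\pm}_{ab}$ in the opposite order, which is precisely the block that \eqref{plan:B} transports; this mismatch is exactly why the two halves of \eqref{plan:star}, with $R_{bn}$ in one and $R_{nb}$ in the other, are both needed. The $\pm$ cases are identical apart from the choice of asymptotic limit, and so can be carried out in parallel.
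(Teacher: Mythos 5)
Your proof is correct, and it reaches the lemma by a genuinely different route in its second half than the paper does. For the first intertwining relation you and the paper agree in substance: the paper obtains $R_{12}^{\pm}\, T_{1}^{\pm}\, T_{2}(u) = T_{2}(u)\, T_{1}^{\pm}\, R_{12}^{\pm}$ (its equation (\ref{RTTasym})) by taking the asymptotic limit of the fundamental relation $R_{12}(u_{1}/u_{2})\, T_{1}(u_{1})\, T_{2}(u_{2}) = T_{2}(u_{2})\, T_{1}(u_{1})\, R_{12}(u_{1}/u_{2})$, whereas you re-derive the same identity from scratch by telescoping the limiting Yang--Baxter identity site by site --- a difference of economy only. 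The real divergence is the treatment of $\hat T_{2}(u)$: the paper never intertwines $\hat T$ directly, but instead inverts (\ref{RTTasym}) to get $T_{2}^{-1}(u)\, R_{12}^{\pm}\, T_{1}^{\pm} = T_{1}^{\pm}\, R_{12}^{\pm}\, T_{2}^{-1}(u)$, deduces $\left[R_{12}^{\pm}\, T_{1}^{\pm}\,, T_{2}(u)\, T_{2}^{-1}(u^{-1})\right]=0$, and then invokes unitarity (\ref{unitarity}) to identify $T_{0}^{-1}(u) \propto \hat T_{0}(u^{-1})$. You instead prove the hat-relation $T_{1}^{\pm}\, R_{12}^{\pm}\, \hat T_{2}(u) = \hat T_{2}(u)\, R_{12}^{\pm}\, T_{1}^{\pm}$ directly, from your second asymptotic Yang--Baxter identity (spaces assigned in the order $(a,n,b)$, producing $R_{nb}$ rather than $R_{bn}$) propagated through $\hat T_{2}(u) = R_{12}(u)\cdots R_{N2}(u)$; I checked the telescoping and the ordering of factors, and both of your relations and their composition are sound. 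What each approach buys: yours is independent of the unitarity property and of the invertibility of $T_{2}(u)$ (the paper's inversion step tacitly requires $\zeta(u)\neq 0$, i.e.\ avoiding the degeneration points of the R-matrix), at the cost of a second telescoping argument; the paper's is shorter given that unitarity has already been established, and it makes transparent why the specific combination $T_{2}(u)\,\hat T_{2}(u)$ is the natural object commuting with $R_{12}^{\pm}\, T_{1}^{\pm}$. Your closing remark about the order of the transported block --- $R_{12}^{\pm}T_{1}^{\pm}$ in the first relation versus $T_{1}^{\pm}R_{12}^{\pm}$ in the second --- identifies exactly the point where a sloppier one-relation argument would fail, and is the same order mismatch the paper resolves via the inverse monodromy matrix.
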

\begin{proof}
\quad We recall the fundamental relation
\be
R_{12}(u_{1}/u_{2})\, T_{1}(u_{1})\, T_{2}(u_{2}) = T_{2}(u_{2})\, 
T_{1}(u_{1})\, R_{12}(u_{1}/u_{2}) \,.
\ee
Taking asymptotic limits of $u_{1}$ yields
\be
R_{12}^{\pm}\, T_{1}^{\pm}\, T_{2}(u) = T_{2}(u)\, 
T_{1}^{\pm}\, R_{12}^{\pm} \,,
\label{RTTasym}
\ee
which further implies
\be
T_{2}^{-1}(u)\, R_{12}^{\pm}\,  T_{1}^{\pm} = T_{1}^{\pm}\, 
R_{12}^{\pm}\, T_{2}^{-1}(u) \,.
\label{TRTasym}
\ee
Therefore,
\be
R_{12}^{\pm}\, T_{1}^{\pm}\, T_{2}(u)\, T_{2}^{-1}(u^{-1})  &=& 
T_{2}(u)\, T_{1}^{\pm}\, R_{12}^{\pm} \, T_{2}^{-1}(u^{-1}) \non\\
&=& T_{2}(u)\, T_{2}^{-1}(u^{-1})\, R_{12}^{\pm}\,  T_{1}^{\pm} \,,
\ee
where the first equality follows from (\ref{RTTasym}), and the second 
equality follows from (\ref{TRTasym}). We have therefore shown the 
commutativity property
\be
\left[R_{12}^{\pm}\, T_{1}^{\pm}\,, T_{2}(u)\, T_{2}^{-1}(u^{-1}) 
\right] = 0 \,.
\label{comm}
\ee
Furthermore,
\be
T_{0}^{-1}(u) &=& R_{01}^{-1}(u) \cdots R_{0N}^{-1}(u) \non \\
&\propto& R_{10}(u^{-1}) \cdots R_{N0}(u^{-1}) = \hat 
T_{0}(u^{-1})\,,
\ee
where the second line follows from unitarity (\ref{unitarity}).
Substituting into (\ref{comm}) we obtain the desired result 
(\ref{lemma1}). 
\end{proof}

\begin{lemma}
\be
M^{-1}_{1}\, 
\left((R^{\pm}_{12})^{-1}\right)^{t_{2}}\, M_{1}\, R^{\pm\ 
t_{2}}_{12} = \id^{\otimes 2} \,.
\label{lemma2}
\ee
\end{lemma}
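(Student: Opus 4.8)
The plan is to recognize (\ref{lemma2}) as a \emph{crossing-unitarity} relation for the asymptotic R-matrices $R^{\pm}$, obtained by pushing the unitarity property (\ref{unitarity}) and the crossing symmetry (\ref{crossingR}) to the limits $u\to\infty$ and $u\to 0$. First I would rewrite the claim in the equivalent form
\[
\bigl((R^{\pm}_{12})^{-1}\bigr)^{t_2} = M_1\,\bigl((R^{\pm}_{12})^{t_2}\bigr)^{-1}\,M_1^{-1}\,,
\]
which exhibits it as the statement that partial transposition in space $2$ commutes with inversion up to conjugation by $M_1$. This makes transparent that only two structural inputs are needed: a rule turning an \emph{inverse} into a \emph{full} transpose, and a rule turning a \emph{partial} transpose into conjugation by $V$.

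Second, I would extract these two inputs as asymptotic identities. Taking $u\to\infty$ (resp. $u\to 0$) in the unitarity relation (\ref{unitarity}), using $R_{21}=R_{12}^{t_1 t_2}$ and the leading behaviour $\zeta(u)\sim -u^{2}$ (resp. $-u^{-2}$), gives $R^{\pm}_{12}\,R^{\mp}_{21}=\id^{\otimes 2}$, that is $(R^{\pm}_{12})^{t_1 t_2}=(R^{\mp}_{12})^{-1}$. Taking the same limits in the crossing relation (\ref{crossingR}), where the shift $\bar u=-u^{-1}q^{-1}$ produces a factor $q$, yields $R^{+}_{12}=q\,V_1\,(R^{-}_{12})^{t_2}\,V_1$, equivalently $(R^{+}_{12})^{t_2}=q\,V_1\,R^{-}_{12}\,V_1$ and $R^{-}_{12}=q^{-1}V_1\,(R^{+}_{12})^{t_2}\,V_1$.

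Third, I would combine these. For the $+$ case the unitarity identity converts $\bigl((R^{+}_{12})^{-1}\bigr)^{t_2}$ into $(R^{-}_{12})^{t_1}$, so the claim reduces to $(R^{-}_{12})^{t_1}\,M_1\,(R^{+}_{12})^{t_2}=M_1$. Using the crossing identity together with the partial-transpose product rule $(V_1 B V_1)^{t_1}=V_1^{t}\,B^{t_1}\,V_1^{t}$, and the unitarity identity once more in the form $(R^{+}_{12})^{t_1 t_2}=(R^{-}_{12})^{-1}$, one writes $(R^{-}_{12})^{t_1}=q^{-1}V_1^{t}\,(R^{-}_{12})^{-1}\,V_1^{t}$ and $(R^{+}_{12})^{t_2}=q\,V_1\,R^{-}_{12}\,V_1$. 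Substituting and inserting $M=V^{t}V$ collapses the central factor $V_1^{t}M_1V_1=(V^{t})^{2}_1 V^{2}_1$ to a scalar; the surviving $(R^{-}_{12})^{-1}R^{-}_{12}=\id$ and $V_1^{t}V_1=M_1$ then leave exactly $M_1$. The $-$ case is identical with $R^{+}\leftrightarrow R^{-}$.

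The hard part is purely the algebraic bookkeeping: transposing the one-space products $V_1(\cdots)V_1$ correctly, and verifying that every stray scalar cancels. The latter rests on the elementary fact $V^{2}=(-1)^{2s}\,\id$, which follows directly from $V_{jk}=(-1)^{j}Q^{s+1-j}\delta_{j+k,2s+2}$; since then $(V^{t})^{2}=(V^{2})^{t}=(-1)^{2s}\id$ and $(-1)^{4s}=1$, the factors of $q$ and of $(-1)^{2s}$ indeed combine to unity, producing $\id^{\otimes 2}$ and proving (\ref{lemma2}).
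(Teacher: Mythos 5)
Your proposal is correct and takes essentially the same approach as the paper: both proofs establish (\ref{lemma2}) by combining the asymptotic limits of unitarity (\ref{unitarity}), which gives $(R^{\pm}_{12})^{t_{1}t_{2}}=(R^{\mp}_{12})^{-1}$, with those of crossing (\ref{crossingR}), using $V^{2}=(-1)^{2s}\,\id$ to cancel all scalars. The only difference is organizational: the paper substitutes crossing into unitarity at finite $u$ and then takes a single asymptotic limit to reach the intermediate identity (\ref{intermed}), whereas you take the limits of the two relations separately (tracking the explicit factors of $q^{\pm 1}$) and recombine them afterward.
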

\begin{proof}
\quad We write the unitarity condition 
(\ref{unitarity}) as
\be
R_{12}(u)\, R_{12}^{t_{1} t_{2}}(u^{-1}) = \zeta(u)\, \id^{\otimes 
2}\,,
\label{unitarity2}
\ee
and then use crossing symmetry (\ref{crossingR}) to obtain
\be
V_{1}\, R_{12}^{t_{2}}(-u^{-1} q^{-1})\, V_{1}\, V_{1}^{t_{1}}\, 
R_{12}^{t_{1}}(-u q^{-1})\, V_{1}^{t_{1}} = \zeta(u)\, \id^{\otimes 2} 
\,.
\ee
By taking asymptotic limits and noting that 
$V^{2}=(-1)^{2s}\, \id$, we obtain
\be
R^{\pm\ t_{2}}_{12}\, M^{-1}_{1}\, R^{\mp\ t_{1}}_{12}\, M_{1} = 
\id^{\otimes 2}  \,.
\label{intermed}
\ee
Moreover, from (\ref{unitarity2}) we obtain 
$R^{\pm}_{12}\, R_{12}^{\mp\ t_{1} t_{2}} = \id$, which implies that 
\be
R_{12}^{\mp\ t_{1} t_{2}} = (R^{\pm}_{12})^{-1} \,, \quad \mbox{  or  } 
\quad
R_{12}^{\mp\ t_{1}} = \left((R^{\pm}_{12})^{-1}\right)^{t_{2}}  \,.
\ee
Substituting into (\ref{intermed}), we obtain
\be
R^{\pm\ t_{2}}_{12}\, M^{-1}_{1}\, \left((R^{\pm}_{12})^{-1}\right)^{t_{2}}\, M_{1} = 
\id^{\otimes 2}  \,,
\ee
which can be rearranged to give the desired result (\ref{lemma2}).  
\end{proof}

We are now ready to prove the main result (\ref{qgsymmetry}), which 
is equivalent to the following

\begin{prop*}
\be
\left[T_{1}^{\pm}\,, t(u) \right] = 0 \,.
\label{prop}
\ee
\end{prop*}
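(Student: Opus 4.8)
The plan is to transport $T_1^\pm$ through $t(u)$ from one side to the other. I relabel the auxiliary space of the transfer matrix as space $2$, so that $t(u)=\tr_2 M_2\,U_2(u)$ with $U_2(u):=T_2(u)\,\hat T_2(u)$, while the symmetry generator lives in the distinct auxiliary space $1$. The entries of $T_1^\pm$ are operators on the quantum space and need not commute with those of $U_2(u)$; however, $T_1^\pm$ carries no space-$2$ index and commutes with the numerical matrix $M_2$, so it may be moved inside or outside the partial trace $\tr_2$ as long as one preserves the operator ordering on the quantum space. Rewriting Lemma~\ref{lemma1} as $T_1^\pm\,U_2(u)=(R_{12}^\pm)^{-1}\,U_2(u)\,R_{12}^\pm\,T_1^\pm$ (left-multiply (\ref{lemma1}) by $(R_{12}^\pm)^{-1}$), I would then compute
\[
T_1^\pm\,t(u)=\tr_2\!\big[M_2\,T_1^\pm\,U_2(u)\big]=\tr_2\!\big[M_2\,(R_{12}^\pm)^{-1}\,U_2(u)\,R_{12}^\pm\big]\,T_1^\pm .
\]

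Comparing this with $t(u)\,T_1^\pm=\big(\tr_2 M_2\,U_2(u)\big)T_1^\pm$, the Proposition~(\ref{prop}) reduces to the statement that the $M_2$-weighted partial trace is unchanged when $U_2(u)$ is conjugated by the asymptotic R-matrix,
\[
\tr_2\!\big[M_2\,(R_{12}^\pm)^{-1}\,U_2(u)\,R_{12}^\pm\big]=\big(\tr_2 M_2\,U_2(u)\big)\,\id_1 .
\]
Since $R_{12}^\pm$ is numerical in the auxiliary spaces and both sides are linear in the middle factor, it is enough to establish the corresponding R-matrix identity with an arbitrary $Z_2$ in place of $U_2(u)$.

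This last identity is where Lemma~\ref{lemma2} enters, and it is the main obstacle. The partial trace $\tr_2$ is not cyclic, so the two copies of $R^\pm$ cannot be brought together and cancelled for free; what makes the cancellation work is precisely the crossing--unitarity relation (\ref{lemma2}), which is engineered (through $M=V^{t}V$) so that the $M$-weighted trace is cyclic with respect to the braiding. Concretely, I would transpose in the auxiliary space $2$ inside the trace (which leaves $\tr_2$ invariant), then use the relation $R_{12}^{\mp\,t_1}=\big((R_{12}^\pm)^{-1}\big)^{t_2}$ and the $M$-conjugation identity (\ref{lemma2}) to carry the left factor $(R_{12}^\pm)^{-1}$ across $Z_2$, where it telescopes against $R_{12}^\pm$ to the identity. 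The delicate point---and the part that most needs care---is the bookkeeping of which space each transpose and each copy of $M$ acts on; here one trades the $M_1$ form of (\ref{lemma2}) for the $M_2$ form demanded by the trace, using the PT-symmetry $R_{21}=R_{12}^{t_1 t_2}$ recorded in (\ref{unitarity}).

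Assembling these steps gives $T_1^\pm\,t(u)=t(u)\,T_1^\pm$, which is the Proposition; reading off the entries in the auxiliary space then yields $[T_{ij}^\pm,t(u)]=0$ for all $i,j$, i.e. (\ref{qgsymmetry}). The computation is identical for the two asymptotic limits $R^\pm$, and since the R-matrix, the matrix $M$, and both lemmas are available for every spin $s$, the result holds uniformly in $s$.
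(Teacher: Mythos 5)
Your proposal is correct and follows essentially the same route as the paper: Lemma~\ref{lemma1} braids $T_{1}^{\pm}$ through the double-row monodromy at the cost of conjugation by $R_{12}^{\pm}$, the generator is pulled out of $\tr_{2}$, and the partial-transpose trick in space $2$ together with the crossing--unitarity relation of Lemma~\ref{lemma2} makes the residual conjugation invisible under the $M$-weighted trace. The only (valid) variation is bookkeeping: where you would convert Lemma~\ref{lemma2} into an $M_{2}$ form via the PT symmetry $R_{21}=R_{12}^{t_{1}t_{2}}$, the paper instead inserts $M_{1}^{-1}M_{1}$ and invokes $\left[M_{1}M_{2}\,, R_{12}^{\pm}\right]=0$ so as to apply Lemma~\ref{lemma2} verbatim in its $M_{1}$ form.
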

\begin{proof}
\quad Recalling the expression (\ref{transfer}) for the transfer matrix, we 
obtain
\be
T_{1}^{\pm}\, t(u) &=& \tr_{2}\left\{T_{1}^{\pm}\, M_{2}\, T_{2}(u)\, 
\hat T_{2}(u) \right\} \non \\
&=& \tr_{2}\left\{M^{-1}_{1}\, M_{1}\,M_{2}\, (R^{\pm}_{12})^{-1}\, 
R^{\pm}_{12}\, T_{1}^{\pm}\,  T_{2}(u)\, \hat T_{2}(u) \right\} \non \\
&=& \tr_{2}\left\{M^{-1}_{1}\, (R^{\pm}_{12})^{-1}\, M_{1}\,M_{2}\,  
T_{2}(u)\, \hat T_{2}(u)\, R^{\pm}_{12}\, T_{1}^{\pm}   \right\} = 
\ldots 
\ee
In passing to the third line, we have used the fact 
$\left[ M_{1}\,M_{2}\,, R^{\pm}_{12} \right] = 0$ as well as the first
lemma (\ref{lemma1}). Then
\be
\ldots &=& \tr_{2}\left\{M^{-1}_{1}\, (R^{\pm}_{12})^{-1}\, M_{1}\,M_{2}\,  
T_{2}(u)\, \hat T_{2}(u)\, R^{\pm}_{12} \right\} T_{1}^{\pm} 
\non \\
&=& \tr_{2}\left\{ A_{12}\, Z_{2}\, R^{\pm}_{12}\, \right\} T_{1}^{\pm}  \non \\
&=& \tr_{2}\left\{ A_{12}^{t_{2}}\, R^{\pm\ t_{2}}_{12}\, Z_{2}^{t_{2}}  \right\} T_{1}^{\pm} = \ldots  
\ee 
In passing to the second line we have made the identifications
$A_{12} = M^{-1}_{1}\, (R^{\pm}_{12})^{-1}\, M_{1}$
and $Z_{2} = M_{2}\, T_{2}(u)\, \hat T_{2}(u)$. Finally, we obtain
\be
\ldots &=& \tr_{2}\left\{ M^{-1}_{1}\, 
\left((R^{\pm}_{12})^{-1}\right)^{t_{2}}\, M_{1}\, R^{\pm\ 
t_{2}}_{12}\,  Z_{2}^{t_{2}} \right\} T_{1}^{\pm} \non \\
&=&  \tr_{2}\left\{ Z_{2}^{t_{2}} \right\} T_{1}^{\pm} \non \\
&=& t(u)\,  T_{1}^{\pm} \,.
\ee
In passing to the second line we have used the second lemma 
(\ref{lemma2}).  
\end{proof}

\subsection{Degeneracies and multiplicities}\label{sec:degmult}

The $U_{q}(2s+1)$ symmetry of the transfer matrix implies that its 
eigenstates form representations of this algebra. The space of states 
has the decomposition (see e.g. \cite{Martin,Kulish,Benkart2005,KMN,Aufgebauer:2010gg} 
and references therein)
\be
\left({\mathbb C}^{2s+1}\right)^{\otimes N} = \bigoplus_{k=0(1)}^{N} 
\nu_{k}\, V_{k}\,,
\label{decomp}
\ee
where the summation is over even (odd) integers for even (odd) $N$, 
respectively; $V_{k}$ are representations of $U_{q}(2s+1)$; and 
$\nu_{k}$ are the multiplicities. The dimensions of the 
representations are given by
\be
\dim V_{k} = p_{k}(2s+1)\,,
\label{dimension}
\ee
where $p_{k}(x)$ are Chebyshev polynomials of the second kind, which 
are defined by the recurrence relations
\be
p_{k+1}(x) + p_{k-1}(x) = x\, p_{k}(x)\,, \qquad p_{0}(x) = 1\,, \qquad 
p_{-1}(x) = 0\,.
\ee
The multiplicities are given by \footnote{Note that the 
multiplicities $\nu_{k}$ are independent of $s$.}
\be
\nu_{k} = \left\{\begin{array}{ll}
{N\choose \frac{N-k}{2}} - {N\choose \frac{N-k}{2}-1}  
& \qquad k \ne 0\,, N  \\
\frac{1}{\frac{N}{2}+1}{N\choose \frac{N}{2}} & \qquad k=0 \\
1 & \qquad k=N
\end{array}\right. \,,
\label{multiplicity}
\ee
which are the dimensions of representations $W_{k}$ of the TL algebra 
$TL_{N}$. As a check on (\ref{decomp}) - (\ref{multiplicity}),
one can verify that the sum rule
\be
\sum_{k=0(1)}^{N} \nu_{k}\, \dim V_{k} = (2s+1)^{N}
\ee
is satisfied.

For given values of $N$ and $s$, let ${\cal N}(N, M)$ denote the
number of solutions of the Bethe equations (\ref{BE}) with $M$ roots,
and let ${\cal D}(N, s, M)$ denote the corresponding degeneracy, i.e.,
the number of transfer-matrix eigenvalues (\ref{lambda})
corresponding to each solution of the Bethe equations with $M$ roots. 
We propose that ${\cal N}(N, M)$ and ${\cal D}(N, s, M)$ are related 
to  $\nu_{k}$ and $\dim V_{k}$ in the following simple way: 
\footnote{We note that \cite{Aufgebauer:2010gg} does not discuss 
either the open-chain Bethe equations (\ref{BE}) or the open-chain transfer matrix 
(\ref{transfer}), and therefore does not contain the results 
(\ref{numbersltns})-(\ref{Mkreltn}).}
\be
{\cal N}(N, M) &=& \nu_{k} \,, 
\label{numbersltns}\\
{\cal D}(N, s, M) &=& \dim V_{k}\,, 
\label{degeneracy}
\ee
with
\be
M=\frac{1}{2}(N-k)\,.
\label{Mkreltn}
\ee
We have verified these relations for small values of $N$ and $s$.  
See e.g. Tables \ref{table:N2gptheory} - \ref{table:N4gptheory}, and compare
with Tables \ref{table:N2} - \ref{table:N4}, respectively.

\begin{table}[h!]
   \small
 \centering
 \begin{tabular}{cc|c|c|c|}
   \cline{3-5}
   & & \multicolumn{3}{ c| }{$\dim V_{k}$}\\
   \hline
   \multicolumn{1}{ |c|  } {$k$} & {  $\nu_{k}$  } & $s=\frac{1}{2}$ & $s=1$  & $s=\frac{3}{2}$\\
   \hline
    \multicolumn{1}{ |c|  } 0 & 1 & 1 & 1 & 1 \\
    \multicolumn{1}{ |c|  } 2 & 1 & 3 & 8 & 15 \\
    \hline
    \end{tabular}
  \caption{\small Dimensions  (\ref{dimension}) and
  multiplicities $\nu_{k}$ (\ref{multiplicity}) of representations $V_{k}$ for $N=2$ and $s=\frac{1}{2}, 1, 
  \frac{3}{2}$.}
  \label{table:N2gptheory}
\end{table}

\begin{table}[h!]
   \small
 \centering
 \begin{tabular}{cc|c|c|c|}
   \cline{3-5}
   & & \multicolumn{3}{ c| }{$\dim V_{k}$}\\
   \hline
   \multicolumn{1}{ |c|  } {$k$} & {  $\nu_{k}$  } & $s=\frac{1}{2}$ & $s=1$  & $s=\frac{3}{2}$\\
   \hline
    \multicolumn{1}{ |c|  } 1 & 2 & 2 & 3 & 4 \\
    \multicolumn{1}{ |c|  } 3 & 1 & 4 & 21 & 56 \\
    \hline
    \end{tabular}
  \caption{\small Dimensions  (\ref{dimension}) and
  multiplicities $\nu_{k}$ (\ref{multiplicity}) of representations 
  $V_{k}$ for $N=3$ and $s=\frac{1}{2}, 1, 
  \frac{3}{2}$.}
  \label{table:N3gptheory}
\end{table}

\begin{table}[h!]
   \small
 \centering
 \begin{tabular}{cc|c|c|c|}
   \cline{3-5}
   & & \multicolumn{3}{ c| }{$\dim V_{k}$}\\
   \hline
   \multicolumn{1}{ |c|  } {$k$} & {  $\nu_{k}$  } & $s=\frac{1}{2}$ & $s=1$  & $s=\frac{3}{2}$\\
   \hline
    \multicolumn{1}{ |c|  } 0 & 2 & 1 & 1 & 1 \\
    \multicolumn{1}{ |c|  } 2 & 3 & 3 & 8 & 15 \\
    \multicolumn{1}{ |c|  } 4 & 1 & 5 & 55 & 209 \\
    \hline
    \end{tabular}
  \caption{\small Dimensions  (\ref{dimension}) and
  multiplicities $\nu_{k}$ (\ref{multiplicity}) of representations 
  $V_{k}$ for $N=4$ and $s=\frac{1}{2}, 1, 
  \frac{3}{2}$.}
  \label{table:N4gptheory}
\end{table}

\section{Algebraic Bethe ansatz}\label{sec:ABA}

We present here several conjectures related to the algebraic Bethe 
ansatz solution of the TL chain. The conjecture for the off-shell 
equation has been proved for $s=1$ \cite{Nepomechie:2016ruv}, while the 
other conjectures have been checked numerically (up to $M=3$, $N=6$ and $s=\frac{3}{2}$).

\subsection{Off-shell equation}
In order to implement the algebraic Bethe ansatz, we need to choose a convenient
representation in the auxiliary space for the double-row monodromy matrix $T_0(u)\,\hat T_0(u)$. We choose
\be
T_0(u)\,\hat T_0(u)=\left(
\begin{array}{ccccc}
\mA(u) & \mB_{1,2}(u)&\cdots&\mB_{1,2s}(u) & \mB(u) \\
\mC_{2,1}(u) & \mA_2(u) &\cdots&\mB_{2,2s}(u)& \mB_{2,2s+1}(u) \\
\vdots & \vdots &\ddots&\vdots& \vdots \\
\mC_{2s,1}(u) & \mC_{2s,2}(u) &\cdots&\mA_{2s}(u)& \mB_{2s,2s+1}(u) \\
\mC(u) & \mC_{2s+1,2}(u)&\cdots&\mC_{2s+1,2s}(u) & \tilde\mD(u)
\end{array}
\right)_{(2s+1)\times (2s+1)}\,,
\label{doublerowrep}
\ee
where each entry acts on the quantum space $\left({\mathbb C}^{2s+1}\right)^{\otimes
N}$. Let us also introduce the $(2s+1)-$dimensional reference state
\be
\refs=
{\left(
\begin{array}{c}1\\ 0 \\ 
\vdots\\0
\end{array}\right)}^{\otimes N}
\ee
and its dual
\be
\drefs={\left(
\begin{array}{cccc}1 & 0 & \cdots & 0
\end{array}\right)}^{\otimes N}
\ee
such that $\drefs0\rangle=1$. We have found that the (dual) Bethe vectors are generated
by the action of a single double-row operator, namely ($\mC(u)$) $\mB(u)$. Indeed,
let us define the Bethe vector as
\be\label{BV}
|u_1,\dots, u_M\rangle = \prod_{k=1}^M\mB(u_k)\refs
\ee
as well as its dual
\be\label{dBV}
\langle u_1,\dots,u_M| = \drefs\prod_{k=1}^M\mC(u_k)\,.
\ee
We conjecture that the action of the transfer matrix (\ref{transfer}) on (\ref{BV}) is given by
\be
t(u)|u_1,\dots ,u_M\rangle=\Lambda(u;u_1,\dots,u_M)|u_1\dots u_M\rangle
+\sum_{k=1}^M\lambda_k|u_1,\dots,u_{k-1},u,u_{k+1},\dots, u_M\rangle\,,
\label{rightoffshell}
\ee
while the action on (\ref{dBV}) is given by,
\be
\langle u_1,\dots,u_M|t(u)=\langle u_1,\dots,u_M|\Lambda(u;u_1,\dots,u_M)+\sum_{k=1}^M\langle u_1,\dots,u_{k-1},u,u_{k+1},\dots, u_M|\lambda_k\,,
\label{leftoffshell}
\ee
where
\be
\lefteqn{\lambda_k=-\frac{\omega(q)\omega(u^2q^2)\omega(u_k^2)}{\omega(uu_k^{-1})\omega(uu_kq)\omega(u_k^2q)}}\non\\
&&\times
\left[\omega(u_kq)^{2N}\prod_{\scriptstyle{j \ne k}\atop \scriptstyle{j=1}}^M\frac{\omega(u_ku_j^{-1}q^{-1})\omega(u_ku_j)}{\omega(u_ku_j^{-1})\omega(u_ku_jq)}
-\omega(u_k)^{2N}\prod_{\scriptstyle{j \ne k}\atop 
\scriptstyle{j=1}}^M\frac{\omega(u_ku_j^{-1}q)\omega(u_ku_jq^2)}{\omega(u_ku_j^{-1})\omega(u_ku_jq)}\right]\,,
\ee
and $\Lambda(u;u_1,\dots,u_M)$ is given by (\ref{lambda}).  Since the
equations (\ref{rightoffshell}) and (\ref{leftoffshell}) are valid for
arbitrary $\{u_k\}$, we write explicitly the dependence of $\Lambda$
on $\{u_k\}$.  Evidently, $\lambda_k=0$ when the Bethe equations
(\ref{BE}) are satisfied, in which case the Bethe states 
(\ref{BV}) and (\ref{dBV}) are right and left eigenstates of the 
transfer matrix $t(u)$, respectively, with corresponding eigenvalue $\Lambda(u;u_1,\dots,u_M)$.  
For the $s=\frac{1}{2}$ case the results
(\ref{rightoffshell}) and (\ref{leftoffshell}) are known
\cite{Sklyanin:1988yz}; for the $s=1$ case a proof will be reported in
a separate paper \cite{Nepomechie:2016ruv}.

\subsection{Highest-weight property}\label{sec:hw}

When the Bethe states (\ref{BV}) 
are on shell (i.e., when $\{u_{1}, \ldots, u_{M}\}$ satisfy the Bethe equations (\ref{BE})),
we conjecture that
\be
T_{ii}^{+} |u_1,\dots,u_M \rangle &=&  h_{i} |u_1,\dots,u_M \rangle\,, \qquad i = 1, 2, \ldots, 2s+1 
\,, \\
T_{ij}^{+} |u_1,\dots,u_M \rangle &=&   0\,, \qquad i > j \,,
\ee
where $T_{ij}^{+}$ are quantum group generators defined in section \ref{sec:symmtransf}. That 
is, on-shell Bethe states are highest-weight states of the quantum 
group, in the sense that they are eigenstates of the diagonal elements of 
$T^{+}$, and are annihilated by the lower triangular elements of 
$T^{+}$. This would help account for the observations 
in section \ref{sec:degmult} that the degeneracies and multiplicities 
are given by group theory.

\subsection{Scalar products}\label{sec:sp}
 
Let us suppose that $\{u_1,\dots,u_M\}$ are Bethe roots. We propose that the scalar product
between the on-shell state $\langle u_1,\dots,u_M|$ and an arbitrary off-shell state 
$|v_1,\dots ,v_M\rangle$ is given by
\be
\lefteqn{\langle u_1,\dots,u_M|v_1,\dots ,v_M\rangle}\non\\
&&=\left(\frac{1}{2Q^{2s}}\right)^M\,
\prod_{i=1}^M\frac{\omega(u_i)^{2N}u_i\,\omega(u_i^2)}{\omega(u_i^2q)\omega(v_i^2q^2)}
\prod_{j<i}^M\frac{\omega(u_iu_jq^2)}{\omega(u_iu_j)}
\frac{\textrm{Det}_M\left(\frac{\partial}{\partial u_i}\Lambda(v_j;u_1,\dots,u_M)\right)}
{\textrm{Det}_M\left(\frac{1}{\omega(v_iu_j^{-1})\omega(v_iu_jq)}\right)}\,.
\label{slavnov}
\ee
The formula (\ref{slavnov}) was proved in \cite{Kitanine:2007bi} for the $s=\frac{1}{2}$ case with (diagonal) boundary fields (see also \cite{Wang2002633} for the XXX chain).
Performing the limit $v_k\rightarrow u_k$, we obtain the square of the norm, namely,
\be
\lefteqn{\langle u_1,\dots,u_M|u_1,\dots ,u_M\rangle}\non\\
&&=\left(\frac{\omega(q)\omega(-q^2)}{Q^{2s}}\right)^M\,
\prod_{i=1}^M\omega(u_i)^{4N}\omega(u_i^2)^2
\prod_{j<i}^M\frac{\omega(u_iu_jq^2)}{\omega(u_ju_i^{-1})\omega(u_iu_j^{-1})\omega(u_iu_j)\omega(u_iu_j q)^2}
\non\\&&\qquad\qquad\times\textrm{Det}_M\left(G\right)\,,
\label{norm}
\ee
where $G$ is a $M\times M$ matrix with elements
\be
\lefteqn{G_{ij}=\frac{\prod_{k\neq i,j}^M\omega(u_j u_k^{-1}q)\omega(u_ju_kq^2)}{\omega(u_ju_i^{-1}q^{-1})\omega(u_iu_j)}
\Bigg[1-\delta_{i,j}+\delta_{i,j}
\frac{\omega(q)\omega(u_i^2)}{\omega(q^2)\omega(u_i^2q)^2}}
\non\\
&&\times
\Bigg(-\frac{2N\omega(q)}{\omega(u_i)\omega(u_iq)}+\omega(q^2)\sum_{k\neq i}^{M}\frac{1}{\omega(u_iq^{-1}u_k^{-1})\omega(u_iqu_k^{-1})}+
\frac{1}{\omega(u_iu_k)\omega(u_iu_kq^2)}\Bigg)\Bigg]\,.
\label{Gij}
\ee

\section{Discussion}\label{sec:discussion}

We have considered the TL open quantum spin chain associated with the
spin-$s$ representation of quantum-deformed $sl(2)$.  We have
constructed the transfer matrix (\ref{transfer}), and we have seen
that its eigenvalues (\ref{lambda}) and the corresponding Bethe
equations (\ref{BE}) do not depend on the value of the spin.  Due to the
quantum-group invariance of the transfer matrix (\ref{qgsymmetry}),
(\ref{prop}), the number of solutions of the Bethe equations
(\ref{numbersltns}) and the degeneracies of the transfer-matrix
eigenvalues (\ref{degeneracy}) can be inferred from group theory.

We have proposed an algebraic Bethe ansatz construction of the Bethe
vectors (\ref{BV}) and (\ref{dBV}), and the corresponding off-shell
equations (\ref{rightoffshell}) and (\ref{leftoffshell}),
respectively.  Remarkably, despite the fact that the auxiliary space
has dimension greater than 2 for $s>1/2$, a single creation operator
suffices to construct all the Bethe states - no nesting is needed.  We
have also proposed a determinant formula for the scalar products
between off-shell and on-shell Bethe states (\ref{slavnov}).
Remarkably, these results are also universal in the sense that they
depend on the value of the spin only through a constant factor.  It is
important to find proofs for these conjectures.  So far, we have been
able to prove only the off-shell equations for $s=1$
\cite{Nepomechie:2016ruv}.

We have seen that appropriate boundary conditions are necessary for
the TL model to have a universal solution.  Indeed, for periodic
boundary conditions, the solution (\ref{BEclosed}) is no longer
universal.  This solution has the unusual feature that it has a twist
that is ``dynamically'' generated (i.e., the twist is not a fixed
parameter of the model, as is typically the case).  An algebraic Bethe
ansatz solution for this model remains to be found.
It may be interesting to consider generalizations of the open TL chain
(\ref{Hamiltonian}) which are still integrable but have boundary terms
that break the quantum group symmetry \cite{LimaSantos:2010nw, Avan:2010mh}.

\section*{Acknowledgments}
The work of RN was supported in part by the National Science
Foundation under Grant PHY-1212337, and by a Cooper fellowship.
RP thanks the S\~ao Paulo Research Foundation (FAPESP),
grants \# 2014/00453-8 and \# 2014/20364-0, for financial support. We also acknowledge the support by FAPESP and the University of Miami under the SPRINT
grant \#2016/50023-5.

\appendix

\section{Closed TL chain}\label{sec:closed} 

Let $t(u)$ now denote the transfer matrix for the closed TL chain 
with periodic boundary conditions
\be
t(u) = \tr_{0} T_{0}(u) \,,
\label{transferclosed}
\ee
where the monodromy matrix $T_{0}(u)$ is given by (\ref{monodromy}).
To determine the eigenvalues of $t(u)$, we follow the 
same approach used in Section \ref{sec:BA} to analyze the open 
chain. Hence, we consider the inhomogeneous transfer matrix
\be
t(u; \{\theta_{j}\}) = \tr_{0} T_{0}(u; \{\theta_{j}\})\,,
\ee 
where $T_{0}(u; \{\theta_{j}\})$ is given by (\ref{monodromyinhomo}).
Using the fusion procedure, we arrive at the functional relations 
\be
t(q^{-1}\theta_{i}; \{\theta_{j}\})\, t(\theta_{i}; \{\theta_{j}\}) = 
F(q^{-1}\theta_{i})\, \id^{\otimes N}\,, \qquad i = 1, \ldots, N\,,
\label{funcrltntransfclosed}
\ee
where $F(u)$ is now given by (cf. (\ref{functionF}))
\be
F(u) = 
\prod_{i=1}^{N}\left[ (-1)^{2s}\omega(u/\theta_{i})\, \omega( u q^{2}/\theta_{i}) \right] \,.
\ee 
The corresponding eigenvalues $\Lambda(u; \{\theta_{j}\})$ therefore 
obey the same functional relations 
\be
\Lambda(q^{-1}\theta_{i}; \{\theta_{j}\})\, \Lambda(\theta_{i}; \{\theta_{j}\}) = 
F(q^{-1}\theta_{i})\,, \qquad i = 1, \ldots, N\,.
\label{funcrltnclosed}
\ee
To solve these equations, we introduce the functions
\be
a(u; \{\theta_{j}\}) = \kappa \prod_{i=1}^{N} 
(-1)^{s}\omega( u q/\theta_{i}) \,, \qquad 
d(u; \{\theta_{j}\}) = \frac{1}{\kappa} \prod_{i=1}^{N} 
(-1)^{s}\omega( u/\theta_{i})\, \,,
\ee
where the twist parameter $\kappa$ is still to be determined.
We observe that 
\be
a(q^{-1}\theta_{i}; \{\theta_{j}\}) = 0 = d(\theta_{i}; \{\theta_{j}\})\,, 
\qquad a(\theta_{i}; \{\theta_{j}\})\, 
d(q^{-1}\theta_{i}; \{\theta_{j}\}) = F(q^{-1}\theta_{i})\,.
\ee 
Hence, the functional relations (\ref{funcrltnclosed}) are satisfied by
\be
\Lambda(u; \{\theta_{j}\}) = a(u; \{\theta_{j}\})\, \frac{\mQ(u q^{-1})}{\mQ(u)} 
+ d(u; \{\theta_{j}\})\, \frac{\mQ(u q)}{\mQ(u)} \,, 
\ee
where $\mQ(u)$ is now given by
\be
\mQ(u) = \prod_{k=1}^{M}\omega(u/u_{k})\,.
\label{Qfuncclosed}
\ee
Setting the inhomogeneities to unity $\theta_{j}=1$, we conclude that 
the eigenvalues $\Lambda(u)$ of the closed transfer matrix 
(\ref{transferclosed}) are given by
\be
\Lambda(u) = \kappa\, (-1)^{s N}\, \omega(u q)^{N}
\prod_{j=1}^{M}\frac{\omega(u q^{-1}/u_{j})}{\omega(u/u_{j})}
+ \frac{1}{\kappa}\, (-1)^{s N}\, \omega(u)^{N}
\prod_{j=1}^{M}\frac{\omega(u q/u_{j})}{\omega(u/u_{j})} \,, 
\label{lambdaclosed}
\ee
and the Bethe roots are given by the Bethe equations
\be
\left[\frac{\omega(u_{k} q)}{\omega(u_{k})}\right]^{N} = \kappa^{-2}
\prod_{\scriptstyle{j \ne k}\atop \scriptstyle{j=1}}^M
\frac{\omega(u_{k} u_{j}^{-1} q)}
{\omega(u_{k} u_{j}^{-1} q^{-1})} \,.
\label{BEclosed}
\ee
A similar solution was proposed in \cite{Kulish}, except with a 
trivial twist (i.e., with $\kappa=1$). Such a twist is 
not expected, since the transfer matrix (\ref{transferclosed}) 
corresponds to periodic boundary conditions. Nevertheless, 
from numerical studies (see below), we 
find that a nontrivial twist ($\kappa \ne 1$) is 
necessary in order to obtain the complete set of eigenvalues from the 
Bethe ansatz solution. The presence of an effective twist was already  
noted in earlier work, see e.g.
\cite{Alcaraz:1992uq, Aufgebauer:2010gg, Finch:2014nxa, Finch:2015}.

We remark that the twist is characterized by an integer in 
${\mathbb Z}_{N}$. Indeed, we observe from (\ref{Rmatrix}) that $R(1)=\omega(q)\, 
{\cal P}$. Hence, from (\ref{transferclosed}) we obtain
\be
t(1)=\omega(q)^{N}\, U\,,
\label{t1}
\ee
where $U = {\cal P}_{12}\, {\cal P}_{23} \ldots {\cal P}_{N-1, N}$ is 
the one-site shift operator, which satisfies $U^{N}=\id^{\otimes N}$. From 
(\ref{lambdaclosed}) we have
\be
\Lambda(1) = \kappa\, (-1)^{s N}\, \omega(q)^{N}
\prod_{j=1}^{M}\frac{\omega(q u_{j})}{\omega(u_{j})}\,.
\label{lamb1}
\ee
It follows from (\ref{t1}) and  (\ref{lamb1}) that the eigenvalue of $U$ (which we 
also denote by $U$) is given by 
\be
U = \kappa\, (-1)^{s N}\,
\prod_{j=1}^{M}\frac{\omega(q u_{j})}{\omega(u_{j})} \,.
\ee 
Since $U^{N}=1$, we conclude that the twist $\kappa$ and the Bethe 
roots $\{ u_{j} \}$ must satisfy the following constraint
\be
\kappa =  \frac{e^{i 2\pi l/N}}{(-1)^{s N}}\prod_{j=1}^{M}\frac{\omega(u_{j})}{\omega(q 
u_{j})}\,, \qquad l = 0, 1, \ldots, N-1 \,.
\label{twistintger}
\ee
In particular, the twist is characterized by an integer $l \in {\mathbb Z}_{N}$

We have verified numerically for small values of $N$ and $s$ that 
every distinct eigenvalue of the transfer matrix 
(\ref{transferclosed}) can be expressed 
in the form (\ref{lambdaclosed}). See Tables \ref{table:N2closed} and
\ref{table:N3closed}, and note that all $(2s+1)^{N}$ eigenvalues are 
accounted for. Note also that, in contrast with the case of the open chain, the 
solutions of the closed-chain Bethe equations  (\ref{BEclosed}) are not universal: the Bethe roots 
depend on the value of the spin $s$ (cf. Tables 
\ref{table:N2}-\ref{table:N4}).

\begin{table}[h!]
   \small
 \centering
 \begin{tabular}{cccc||cccc||cccc|}
     \hline
   \multicolumn{4}{ |c|| }{$s=\frac{1}{2}$} &  \multicolumn{4}{ |c|| 
   }{$s=1$} & \multicolumn{4}{ |c| }{$s=\frac{3}{2}$}\\
   \hline
   \multicolumn{1}{ |c| } {$M$} &  \multicolumn{1}{ |c| } {$\{ u_{k}  
   \}$}  &  \multicolumn{1}{ |c| } {$\kappa$}   & \multicolumn{1}{ 
   |c|| } {${\cal D}$} &
   \multicolumn{1}{ |c| } {$M$} &  \multicolumn{1}{ |c| } {$\{ u_{k}  
   \}$}  &  \multicolumn{1}{ |c| } {$\kappa$}   & \multicolumn{1}{ 
   |c|| } {${\cal D}$} &
   \multicolumn{1}{ |c| } {$M$} &  \multicolumn{1}{ |c| } {$\{ u_{k}  
   \}$}  &  \multicolumn{1}{ |c| } {$\kappa$}   & \multicolumn{1}{ 
   |c| } {${\cal D}$}  \\
   \hline
   \multicolumn{1}{ |c| } {0} &  \multicolumn{1}{ |c| } {-}  &  
  \multicolumn{1}{ |c| } {-1}   & \multicolumn{1}{ 
   |c|| } {2} &
   \multicolumn{1}{ |c| } {0} &  \multicolumn{1}{ |c| } {-}  &  
   \multicolumn{1}{ |c| } {1}   & \multicolumn{1}{ 
   |c|| } {5} &
   \multicolumn{1}{ |c| } {0} &  \multicolumn{1}{ |c| } {-}  &  
   \multicolumn{1}{ |c| } {-1}   & \multicolumn{1}{ 
   |c| } {9}  \\ 
    \multicolumn{1}{ |c| } {1} &  \multicolumn{1}{ |c| } {$1.41421 i$}  &  
  \multicolumn{1}{ |c| } {1}   & \multicolumn{1}{ 
   |c|| } {1} &
   \multicolumn{1}{ |c| } {0} &  \multicolumn{1}{ |c| } {-}  &  
   \multicolumn{1}{ |c| } {-1}   & \multicolumn{1}{ 
   |c|| } {2} &
   \multicolumn{1}{ |c| } {0} &  \multicolumn{1}{ |c| } {-}  &  
   \multicolumn{1}{ |c| } {1}   & \multicolumn{1}{ 
   |c| } {5}  \\  
 \multicolumn{1}{ |c| } {1} &  \multicolumn{1}{ |c| } {1.41421}  &  
  \multicolumn{1}{ |c| } {1}   & \multicolumn{1}{ 
   |c|| } {1} &
   \multicolumn{1}{ |c| } {1} &  \multicolumn{1}{ |c| } {0.540182}  &  
   \multicolumn{1}{ |c| } {0.381966}   & \multicolumn{1}{ 
   |c|| } {1} &
   \multicolumn{1}{ |c| } {1} &  \multicolumn{1}{ |c| } {0.732051}  &  
   \multicolumn{1}{ |c| } {0.267949}   & \multicolumn{1}{ 
   |c| } {1}  \\  
 \multicolumn{1}{ |c| } {} &  \multicolumn{1}{ |c| } {}  &  
  \multicolumn{1}{ |c| } {}   & \multicolumn{1}{ 
   |c|| } {} &
   \multicolumn{1}{ |c| } {1} &  \multicolumn{1}{ |c| } {1.21699}  &  
   \multicolumn{1}{ |c| } {0.381966}   & \multicolumn{1}{ 
   |c|| } {1} &
   \multicolumn{1}{ |c| } {1} &  \multicolumn{1}{ |c| } {1.1638}  &  
   \multicolumn{1}{ |c| } {0.267949}   & \multicolumn{1}{ 
   |c| } {1}  \\  
    \hline
   & &  \multicolumn{1}{ c  } {total:} & 4 & & & & 9 & & & & 16 
     \end{tabular}
  \caption{\small Solutions  $\{ u_{k} \}$ of the Bethe equations 
  (\ref{BEclosed}), twist $\kappa$, and degeneracies ${\cal D}$ of the corresponding
  eigenvalues (\ref{lambdaclosed}) for $N=2$ and $s=\frac{1}{2}, 1, 
  \frac{3}{2}$ with $q = 0.5$.}
  \label{table:N2closed}
\end{table}

\begin{table}[h!]
   \small
 \centering
 \begin{tabular}{cccc||cccc||cccc|}
     \hline
   \multicolumn{4}{ |c|| }{$s=\frac{1}{2}$} &  \multicolumn{4}{ |c|| 
   }{$s=1$} & \multicolumn{4}{ |c| }{$s=\frac{3}{2}$}\\
   \hline
   \multicolumn{1}{ |c| } {$M$} &  \multicolumn{1}{ |c| } {$\{ u_{k}  
   \}$}  &  \multicolumn{1}{|c|} {$\kappa$}   & \multicolumn{1}{ 
   |c|| } {${\cal D}$} &
   \multicolumn{1}{ |c| } {$M$} &  \multicolumn{1}{ |c| } {$\{ u_{k}  
   \}$}  &  \multicolumn{1}{ |c| } {$\kappa$}   & \multicolumn{1}{ 
   |c|| } {${\cal D}$} &
   \multicolumn{1}{ |c| } {$M$} &  \multicolumn{1}{ |c| } {$\{ u_{k}  
   \}$}  &  \multicolumn{1}{ |c| } {$\kappa$}   & \multicolumn{1}{ 
   |c| } {${\cal D}$}  \\
   \hline
   \multicolumn{1}{ |c| } {0} &  \multicolumn{1}{ |c| } {-}  &  
  \multicolumn{1}{ |c| } {$i$}   & \multicolumn{1}{ 
   |c|| } {2} &
   \multicolumn{1}{ |c| } {0} &  \multicolumn{1}{ |c| } {-}  &  
   \multicolumn{1}{ |c| } {-1}   & \multicolumn{1}{ 
   |c|| } {8} &
   \multicolumn{1}{ |c| } {0} &  \multicolumn{1}{ |c| } {-}  &  
   \multicolumn{1}{ |c| } {-$i$}   & \multicolumn{1}{ 
   |c| } {20}  \\ 
    \multicolumn{1}{ |c| } {1} &  \multicolumn{1}{ |c| } {$i\sqrt{2}e^{-i \pi/3}$}  &  
  \multicolumn{1}{ |c| } {-$i$}   & \multicolumn{1}{ 
   |c|| } {2} &
   \multicolumn{1}{ |c| } {0} &  \multicolumn{1}{ |c| } {-}  &  
   \multicolumn{1}{ |c| } {$e^{i \pi/3}$}   & \multicolumn{1}{ 
   |c|| } {5} &
   \multicolumn{1}{ |c| } {0} &  \multicolumn{1}{ |c| } {-}  &  
   \multicolumn{1}{ |c| } {$i e^{i \pi/3}$}   & \multicolumn{1}{ 
   |c| } {16}  \\  
 \multicolumn{1}{ |c| } {1} &  \multicolumn{1}{ |c| } {$-i\sqrt{2}e^{i \pi/3}$}  &  
  \multicolumn{1}{ |c| } {-$i$}   & \multicolumn{1}{ 
   |c|| } {2} &
   \multicolumn{1}{ |c| } {0} &  \multicolumn{1}{ |c| } {-}  &  
   \multicolumn{1}{ |c| } {$e^{-i \pi/3}$}   & \multicolumn{1}{ 
   |c|| } {5} &
   \multicolumn{1}{ |c| } {0} &  \multicolumn{1}{ |c| } {-}  &  
   \multicolumn{1}{ |c| } {$i e^{-i \pi/3}$}   & \multicolumn{1}{ 
   |c| } {16}  \\  
 \multicolumn{1}{ |c| } {1} &  \multicolumn{1}{ |c| } {$\sqrt{2}$}  &  
  \multicolumn{1}{ |c| } {-$i$}   & \multicolumn{1}{ 
   |c|| } {2} &
   \multicolumn{1}{ |c| } {1} &  \multicolumn{1}{ |c| } {$i\sqrt{2}$}  &  
   \multicolumn{1}{ |c| } {-1}   & \multicolumn{1}{ 
   |c|| } {3} &
   \multicolumn{1}{ |c| } {1} &  \multicolumn{1}{ |c| } {$-i\sqrt{2}e^{i \pi/3}$ }  &  
   \multicolumn{1}{ |c| } {$i$}   & \multicolumn{1}{ 
   |c| } {4}  \\  
 \multicolumn{1}{ |c| } {} &  \multicolumn{1}{ |c| } {}  &  
  \multicolumn{1}{ |c| } {}   & \multicolumn{1}{ 
   |c|| } {} &
   \multicolumn{1}{ |c| } {1} &  \multicolumn{1}{ |c| } 
   {$(3\sqrt{3}+i)/\sqrt{14}$}  &  
   \multicolumn{1}{ |c| } {-1}   & \multicolumn{1}{ 
   |c|| } {3} &
   \multicolumn{1}{ |c| } {1} &  \multicolumn{1}{ |c| } {$i\sqrt{2}e^{-i \pi/3}$}  &  
   \multicolumn{1}{ |c| } {$i$}   & \multicolumn{1}{ 
   |c| } {4}  \\  
   \multicolumn{1}{ |c| } {} &  \multicolumn{1}{ |c| } {}  &  
 \multicolumn{1}{ |c| } {}   & \multicolumn{1}{ 
   |c|| } {} &
   \multicolumn{1}{ |c| } {1} &  \multicolumn{1}{ |c| } {$(3\sqrt{3}-i)/\sqrt{14}$}  &  
   \multicolumn{1}{ |c| } {-1}   & \multicolumn{1}{ 
   |c|| } {3} &
   \multicolumn{1}{ |c| } {1} &  \multicolumn{1}{ |c| } {$\sqrt{2}$}  &  
   \multicolumn{1}{ |c| } {$i$}   & \multicolumn{1}{ |c| } {4}  \\  
 \hline
   & &  \multicolumn{1}{ c  } {total:} & 8 & & & & 27 & & & & 64 
     \end{tabular}
   \caption{\small Solutions  $\{ u_{k} \}$ of the Bethe equations 
  (\ref{BEclosed}), twist $\kappa$, and degeneracies ${\cal D}$ of the corresponding
  eigenvalues (\ref{lambdaclosed}) for $N=3$ and $s=\frac{1}{2}, 1, 
  \frac{3}{2}$ with $q = 0.5$.}
  \label{table:N3closed}
\end{table}


\providecommand{\href}[2]{#2}\begingroup\raggedright\endgroup

\end{document}